\begin{document}

\title{A 3/2--approximation for ``big'' two-bar charts packing\thanks{The work is supported by Mathematical Center in Akademgorodok under agreement No 075-2019-1613 with the Ministry of Science and Higher Education of the Russian Federation.}
}


\author{Adil Erzin  \and Stepan Nazarenko \and Gregory Melidi \and Roman Plotnikov
}


\institute{A. Erzin \at
              Sobolev Institute of Mathematics, Novosibirsk, Russia \\
              Tel.: +7-383-3297540\\
              Fax: +7-383-3332598\\
              \email{adilerzin@math.nsc.ru}           
}

\date{Received: date / Accepted: date}

\maketitle

\begin{abstract}
We consider a Two-Bar Charts Packing Problem (2-BCPP), in which it is necessary to pack two-bar charts (2-BCs) in a unit-height strip of minimum length. The problem is a generalization of the Bin Packing Problem (BPP). Earlier, we proposed an $O(n^2)$-time algorithm that constructs the packing which length at most $2\cdot OPT+1$, where $OPT$ is the minimum length of the packing of $n$ 2-BCs. In this paper, we propose an $O(n^4)$-time 3/2-approximate algorithm when each BC has at least one bar greater than 1/2. 

\keywords{Bar Charts \and Packing \and Approximation}
\end{abstract}

\section{Introduction}
\label{intro}
We faced the need to solve the problem when determined the start times for oil and gas field development projects in compliance with the annual production limits \cite{Erzin20}.

Mathematically the problem is as follows. Let us have a semi-infinite unit-height horizontal strip and a set of bar charts consisting of two bars, each with a height of at most $1$ and unit length. For convenience, BC, with $b$ bars, we will denote by $b$-BC. All 2-BCs are required to pack in a strip of minimum length. When packing BCs, crossing bars are naturally prohibited. Moreover, the bars of each BC can move vertically, but they are inseparable horizontally and cannot be interchanged. An example of feasible packing is in Fig. 1.

\begin{figure*}
\includegraphics[width=\textwidth]{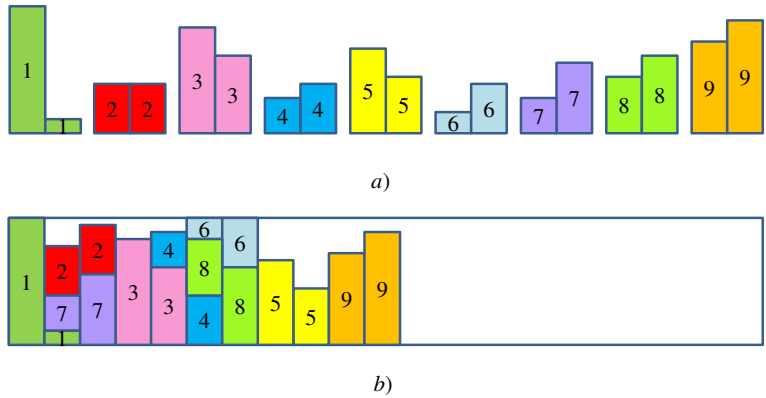}
\caption{The example of feasible packing. \emph{a}) Set of 2-BCs; \emph{b}) Packing of length 11.}
\label{fig1}
\end{figure*}

A similar well-studied problem is the Bin Packing Problem (BPP) \cite{Baker85,Dosa07,Johnson73,Johnson85,Li97,Yue91,Yue95}. In the BPP, we have a set of items $L$ with the sizes not more than $1$ and a set of unit-size containers (bins). It is required to put all items in a minimum number of bins. One packing algorithm is First Fit Decreasing (FFD). All items in the decreasing order placed in the first suitable bin.  Johnson proved that the FFD algorithm uses no more than $11/9\ OPT(L)+4$ bins \cite{Johnson73}. Backer reduced the additive constant to 3 \cite{Baker85}. Yue proved that $FFD(L)\leq 11/9\ OPT (L)+1$ \cite{Yue91}. Then together with Li \cite{Li97}, he improved the result to $FFD(L)\leq 11/9\ OPT (L)+7/9$. D\'{o}sa found the tight boundary of the additive constant and gave an example when $FFD(L)=11/9\ OPT(L)+6/9$ \cite{Dosa07}. A Modified First Fit Decreasing (MFFD) algorithm improves FFD by dividing items into groups by size and packing items from different groups separately. Johnson and Garey proposed this modification and showed that $MFFD(L)\leq 71/60\ OPT(L)+31/6$ \cite{Johnson85}. Subsequently, the result was improved by Yue and Zhang to $MFFD(L)\leq 71/60\ OPT(L)+1$ \cite{Yue95}.

The problem under consideration is also a particular case of the project scheduling problem when each job during a one-time slot consumes the limited non-accumulative resource \cite{Hartmann02,Kolisch06}. For the case of an accumulative resource, an exact algorithm has been developed \cite{Gimadi03}. In the case of a limited renewable resource, the problem is NP-hard, and polynomial algorithms with guaranteed accuracy estimates are not known \cite{Goncharov14,Goncharov17,Hartmann02,Kolisch06}.

The rest of the paper is organized as follows. Section 2 provides a statement of the packing problem for 2-BCs as a Boolean Linear Programming (BLP). Section 3 describes the $O(n^{3.5})$-time algorithm $M$ for packing of 2-BCs, which in the case when the 2-BCs are non-increasing, and the first bar is more than 1/2 yields a 3/2-approximate solution. Section 4 presents the $O(n^4)$-time algorithm $M_w$, which is in the case when at least one bar is more than 1/2 builds a 3/2-approximate solution. In section 5, we summarize and outline directions for further research.

\section{Formulation of the problem}
On the plane, we have a unit-height semi-infinite horizontal strip and a set of 2-BCs $S$ ($|S|=n$). Each 2-BC $i\in S$ consists of two unit-length bars. The height of the first bar is $a_i\in (0,1]$ and of the second $b_i\in (0,1]$. Let us divide the strip into equal rectangles (cells) of unit length and height and renumber them starting from the origin of the strip with integers $1,2,\ldots$.

\begin{definition}
2-BC $i$ is \emph{non-increasing} (\emph{non-decreasing}) if $a_i\geq b_i$ ($a_i\leq b_i$).
\end{definition}

\begin{definition}
\emph{Packing} is a function $p:S\rightarrow Z^+$, which associates with each BC $i$ the cell number of the strip $p(i)$ into which the first bar of BC $i$ falls.
\end{definition}

As a result of packing $p$, bars from 2-BC $i$ occupy the cells $p(i)$ and $p(i)+1$.

\begin{definition}
The packing is \emph{feasible} if the sum of the bar's heights that fall into one cell does not exceed 1.
\end{definition}

\begin{definition}
The packing \emph{length} $L(p)$ is the number of strip cells in which falls at least one bar.
\end{definition}

We assume that any packing $p$ begins from the first cell, and in each cell from 1 to $L(p$), there is at least one bar. If this is not the case, then all or part of the packing can be moved to the left.

The BLP formulation for 2-BCPP made in \cite{Erzin20_2}. Since \cite{Erzin20_2} is still arXiv paper, we repeat the BLP for convenience here. To do this, we introduce the variables:\\
$$
x_{ij}=\left\{
            \begin{array}{ll}
              1, & \hbox{if the first bar of BC $i$ is in the cell $j$;} \\
              0, & \hbox{else.}
            \end{array}
          \right.
$$
$$
y_j=\left\{
            \begin{array}{ll}
              1, & \hbox{if the cell $j$ contains at least one bar;} \\
              0, & \hbox{else.}
            \end{array}
          \right.
$$
Then 2-BCPP is written as follows.
\begin{equation}\label{e1}
  \sum\limits_j y_j \rightarrow\min\limits_{x_{ij},y_j\in\{0,1\}};
\end{equation}
\begin{equation}\label{e2}
  \sum\limits_j x_{ij} =1,\ i\in S;
\end{equation}
\begin{equation}\label{e3}
  \sum\limits_i a_ix_{ij} + \sum\limits_k b_kx_{k,j-1}\leq y_j,\ \forall j.
\end{equation}

The 2-BCPP is strongly NP-hard as the generalizations of the BPP \cite{Johnson73}. Moreover, the problem is $(3/2-\varepsilon)$-inapproximable unless P=NP \cite{Vazirani01}.

In \cite{Erzin20_2}, we proposed an $O(n^2)$-time algorithm, which packs the 2-BCs in the strip of length at most $2\cdot OPT+1$, where $OPT$ is the minimum packing length. In this paper, we propose two new $O(n^{3.5})$- and $O(n^4)$-time packing algorithms based on the sequential matching and prove that if at least one bar of each BC has a height greater than 1/2, then the constructed solution is 3/2-approximate. We show that this is a tight estimation.

\section{Algorithm $M$}
\begin{definition}
  If two BCs share three (two) cells, then we call this situation 1-union (2-union).
\end{definition}

Using the set $S$, we construct a graph $G_1=(V_1,E_1)$ in which the vertices are the images of BCs ($|V_1|=|S|=n$), and an edge connects two BCs if they can create ether 1- or 2-union.

Algorithm $M$ consists of a sequence of the steps. At the first step, in the graph $G_1$, the maximum matching of cardinality $m_1$ is constructed. The result are $n-m_1$ 2- and 3-BCs, which are the prototypes of vertices forming the set $V_2$ of the new graph $G_2=(V_2,E_2)$. The edge $(i,j)\in E_2$ if BCs $i$ and $j$ form a union. At an arbitrary step in the corresponding graph $G_k$, we construct the next maximum matching of cardinality $m_k$. The algorithm stops when in the graph $G_{p+1}$, there are no more pairs of BCs to combine. In Fig. 2, we illustrated the operation of the algorithm.

\begin{figure*}
\includegraphics[width=\textwidth]{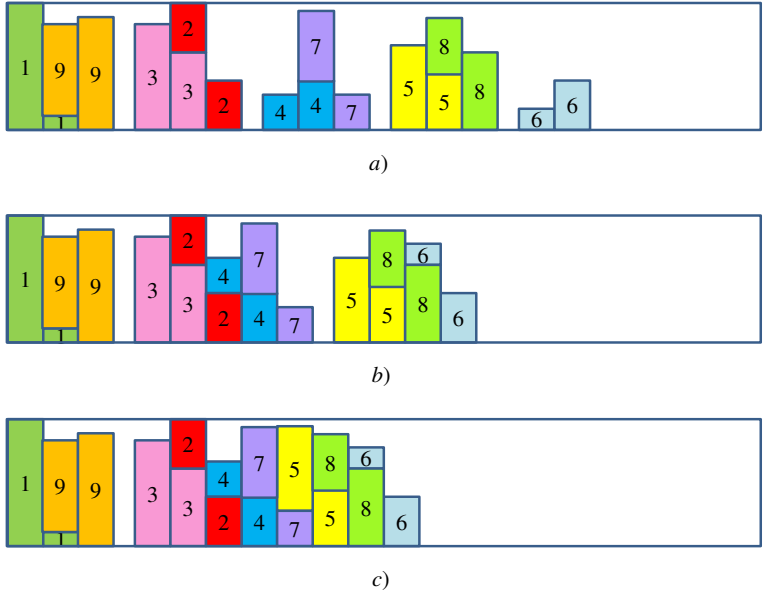}
\caption{Illustration of algorithm $M$ operation. \emph{a}) First matching, $m_1=4$; \emph{b}) Second matching, $m_2=2$; \emph{b}) Third matching, $m_3=1$.}
\label{fig2}
\end{figure*}

The length of the packing constructed by the algorithm $M$ is
\begin{equation}\label{e4}
  L_M(n)=2n-m_1-m_2-\ldots - m_p,\ p\geq 1.
\end{equation}

\begin{definition}
  If at least one bar in BC has height more than 1/2, then such BC we call \emph{big}.
\end{definition}

Let each BC is big and non-increasing. Then in each cell can be no more than two bars, and two BCs can form only a 1-union.

\begin{figure*}
\includegraphics[width=\textwidth]{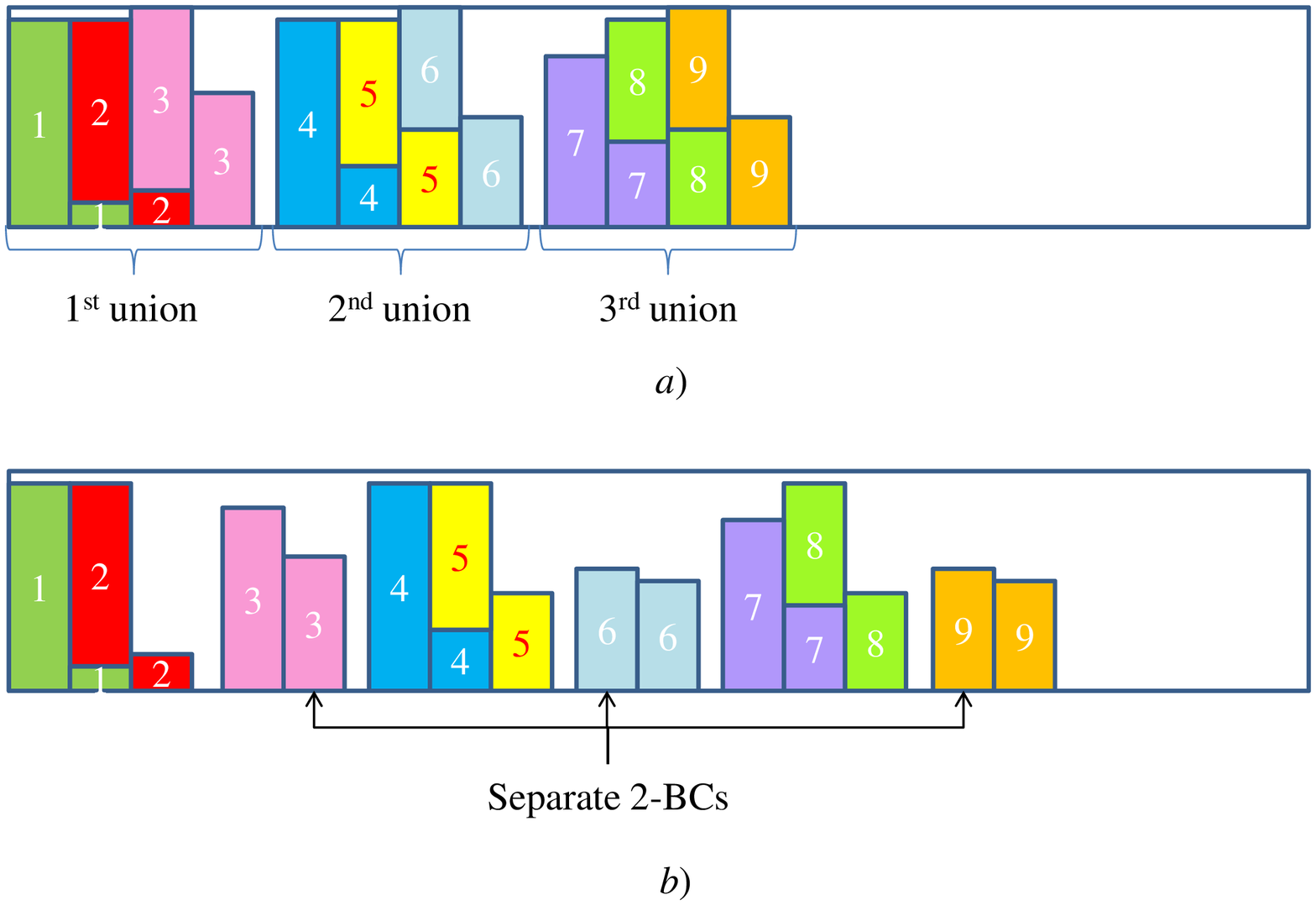}
\caption{Disassembling of optimal packing. \emph{a}) Optimal packing; \emph{b}) First optimal matching.}
\label{fig3}
\end{figure*}

Any packing, including the optimal one, can be disassembled into BCs obtained from the first matching. Let there be optimal packing. Let us single out independent unions, the beginning, and end of which is determined by the presence of one bar in the cell (there are 3 of them in the example in Fig. 3). Each union we will disassemble independently from left to right, separating 3-BCs consisting of two 2-BCs. If the union has an odd number of 2-BCs, then the last one will be one 2-BC (in the example in Figure 3, these are BCs 3, 6, and 9 in different unions).

Thus, the procedure for constructing optimal packing can be represented as a process of sequential construction of maximum matchings. Denote by $m_k^*$ the cardinality of the $k$th matching in the optimal packing. Then the length of optimal packing is
\begin{equation}\label{e5}
  OPT=2n-m_1^*-m_2^*-\ldots -m_q^*,\ q\geq 1.
\end{equation}

\begin{lemma}
 \begin{equation}\label{e6}
   m_2^*+\ldots +m_q^*\leq m_1^*\leq m_1.
 \end{equation}
\end{lemma}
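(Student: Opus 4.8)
The plan is to prove the two inequalities in~\eqref{e6} separately, since they are of quite different natures.

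For the right-hand inequality $m_1^* \le m_1$: here $m_1$ is the cardinality of a \emph{maximum} matching in $G_1$, the graph whose edges are exactly the pairs of big non-increasing 2-BCs that admit a 1-union. The collection of 3-BCs separated out at the first stage of disassembling the optimal packing is, by construction, a set of vertex-disjoint 1-unions, i.e.\ a matching in $G_1$ of cardinality $m_1^*$. Since $m_1$ is maximum, $m_1^* \le m_1$. The only thing to check carefully is that every 3-BC peeled off during the left-to-right disassembly of a union in the optimal packing really does correspond to an edge of $G_1$ — that is, that two consecutive 2-BCs sharing three cells in a feasible packing indeed ``can create a 1-union'' in the sense defining $E_1$. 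This is immediate from the definition, because the witnessing configuration is the one sitting in the optimal packing itself.

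For the left-hand inequality $m_2^* + \cdots + m_q^* \le m_1^*$: after the first matching is removed, the remaining objects in the optimal packing are the $m_1^*$ separated 3-BCs together with the leftover single 2-BCs (one per union with an odd count). Every subsequent matching pairs up two of these current objects into a larger combined object, so each unit counted by $m_2^*, \ldots, m_q^*$ reduces the number of connected objects by exactly one. Starting from some number of objects and never going below $1$ per original independent union, the total number of merges performed after stage~1 is at most (number of objects after stage~1) $-$ (number of independent unions). I would bound the number of objects after stage~1: within one independent union containing $t$ 2-BCs we created $\lfloor t/2\rfloor$ triples plus possibly one single, i.e.\ $\lceil t/2 \rceil$ objects, of which $\lfloor t/2\rfloor$ are triples contributing to $m_1^*$; hence (objects after stage~1) $-$ (independent unions) $\le \lfloor t/2\rfloor$ summed over unions $= m_1^*$. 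Summing over all independent unions gives $m_2^* + \cdots + m_q^* \le m_1^*$.

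The main obstacle I anticipate is not the counting but pinning down the bookkeeping so that the ``independent union'' decomposition is genuinely well defined and respected by every later matching: one must argue that a 1-union boundary (a cell holding a single bar) in the optimal packing is never straddled by a combined object produced at a later stage, so that merges stay inside a single original union and the per-union telescoping is valid. Once that locality is established, both halves of~\eqref{e6} follow from the maximality of $m_1$ and from simple per-union arithmetic, and then~\eqref{e4}, \eqref{e5} together with~\eqref{e6} will presumably be combined to compare $L_M(n)$ with $OPT$.
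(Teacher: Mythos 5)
Your proposal is correct and follows essentially the same route as the paper: the right-hand inequality comes from the disassembled 3-BCs forming a matching in $G_1$ whose cardinality cannot exceed the maximum $m_1$, and the left-hand inequality comes from counting objects remaining after the first matching minus the number of independent unions (your per-union bound $\lceil t/2\rceil-1\leq\lfloor t/2\rfloor$ is arithmetically identical to the paper's global bound $B+m_1^*-U\leq m_1^*$ using $B\leq U$). The locality issue you flag is handled in the paper by the observation that the matchings $m_2^*,\ldots,m_q^*$ are by definition the reassembly of the optimal packing itself, so every later merge necessarily stays inside one of its independent unions.
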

\begin{proof}
We will need the following obvious
\begin{property}
 If there are $X$ BCs occupying a total of $Y$ cells, then after packing them, they will occupy at least $Y-(X-1)$ cells, i.e., packing length will decrease by no more than $X-1$.
\end{property}
Suppose that the optimal packing consists of $U\geq 1$ unions (which means $OPT=n+U$), and after disassembling, there are $B\in [0,U]$ separate 2-BCs. In Fig. 3 for example $U=3$, $B=3$, and $m_1^*=3$. $B$ 2-BCs cannot unite with each other. Otherwise, the disassembled packing is not optimal. Each of them is the last 2-BCs in each union. Total number of BCs after disassembling is $B+(n-B)/2=B+m_1^*$ since $(n-B)/2=m_1^*$. The first matching effects reduction in the number of occupied cells at most $B$ cells due to the union of $B$ 2-BCs. There are remain another $n-B$ 3-BCs, which can be combined only in their unions. Therefore, it is still possible to reduce the packing length by a maximum of $(n-B)/2-U$ cells (Property 1). Total, after the first matching, we can still reduce the packing length by
$$
 m_2^*+\ldots +m_q^*\leq B+(n-B)/2-U=B+m_1^*-U\leq m_1^*\leq m_1,
$$
since $B\leq U$ and by construction, $m_1 \geq m_1^*$. The lemma is proved.
\end{proof}

\begin{theorem}
 If all 2-BCs are big and non-increasing, then the algorithm $M$ constructs a 3/2-approximate solution for the 2-BCPP with time complexity $O(n^{3.5})$.
\end{theorem}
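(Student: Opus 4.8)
\emph{Proof idea.}
The plan is to show that a single matching step already forces the ratio $3/2$, combining Lemma~1 with the identity $OPT=n+U$ used in its proof; the later matchings will only shorten $L_M(n)$, so they play no role in the bound. The step that needs genuine care is the structural rigidity of the ``big and non-increasing'' case, which I would record first. Every such BC has $a_i\ge b_i$ and at least one bar of height above $1/2$, hence $a_i>1/2$ for all $i$; therefore at most one BC can start in any fixed cell (two first bars would already exceed it), so every cell carries at most two bars and each union is a \emph{chain} --- a maximal run of BCs placed at consecutive cells, a chain on $k$ BCs occupying exactly $k+1$ cells. Summing over the $U\ge 1$ chains of an optimal packing gives $OPT=n+U\ge n+1$, i.e.\ $n\le OPT-1$.

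Next comes the arithmetic. From $OPT=2n-(m_1^*+\dots+m_q^*)$ we have $m_1^*+\dots+m_q^*=2n-OPT$, and the estimate $m_2^*+\dots+m_q^*\le m_1^*$ of Lemma~1 gives $2n-OPT\le 2m_1^*$, that is $m_1^*\ge n-OPT/2$. Since the matching obtained by disassembling the optimal packing is a matching of $G_1$ while $m_1$ is a maximum matching of $G_1$, also $m_1\ge m_1^*$. Using $m_k\ge 1$ at every step and the length formula for $L_M(n)$,
$$
L_M(n)=2n-m_1-m_2-\dots-m_p\le 2n-m_1\le 2n-\Bigl(n-\frac{OPT}{2}\Bigr)=n+\frac{OPT}{2}\le \frac{3}{2}\,OPT-1,
$$
so the constructed packing is $3/2$-approximate (in fact a touch better, which is all that $(3/2-\varepsilon)$-inapproximability leaves room for asymptotically).

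For the running time, note that each cell of a feasible packing holds at most two of the $2n$ bars, so $L_M(n)\ge n$ and hence $m_1+\dots+m_p=2n-L_M(n)\le n$; since each $m_k\ge 1$, the number $p$ of matching steps is $O(n)$. Each step computes a maximum matching in a general graph on at most $n$ vertices and $O(n^2)$ edges, costing $O(n^{2.5})$ by the Micali--Vazirani algorithm, while assembling the next graph $G_k$ requires only $O(n^2)$ pairwise compatibility checks and is dominated. The total is therefore $O(n^{3.5})$, as claimed.
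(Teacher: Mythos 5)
Your argument is correct and uses essentially the paper's own machinery: the bound $L_M(n)\le 2n-m_1$, the disassembly identity $OPT=2n-m_1^*-\dots-m_q^*$, Lemma~1, and $m_1\ge m_1^*$ are exactly the ingredients the paper combines. The only divergence is the closing step --- the paper invokes $m_1\le n/2$ and bounds the ratio $(2n-m_1)/(2n-2m_1)$ directly, while you instead use $OPT=n+U\ge n+1$ from the chain structure of big non-increasing BCs (the same $OPT\ge n$ idea the paper defers to Lemma~2), which yields the marginally sharper $L_M(n)\le \frac{3}{2}\,OPT-1$; both closings are sound.
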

\begin{proof}
From (\ref{e4}), (\ref{e5}), (\ref{e6}) and $m_1\leq n/2$ it follows that $L=L_M(n)\leq 2n-m_1$, $OPT\geq 2n-2m_1$ and hence
$$
\frac{L}{OPT}\leq \frac{2n-m_1}{2n-2m_1}=1+\frac{m_1}{2n-2m_1}\leq 1+\frac{n/2}{2n-n}=\frac{3}{2}.
$$
The algorithm, $M$, uses the procedure for constructing the maximum matching at most $O(n)$ times. In \cite{Xie18}, an $O(n^{2.5})$-time algorithm proposed for constructing the maximum matching. So the time complexity of the proposed algorithm is $O(n^{3.5})$.
\end{proof}

Naturally, in the case of non-decreasing big 2-BCs, the algorithm $M$ also constructs a 3/2-approximate solution.

In \cite{Vazirani01}, Theorem 9.2 states that for any $\varepsilon > 0$, there is no approximation algorithm having a guarantee of $3/2-\varepsilon$ for the bin packing problem, assuming $P\neq NP$. Since the 2-BCPP is the generalization of BPP, this is the case for 2-BCPP too. However, if all BCs are big, we need to prove the tightness of the estimate, and we will do it in the next section.

\section{Algorithm $M_w$}
Now let all BCs are big, but not necessarily non-increasing or non-decreasing. There are more union options, and we will distinguish 1-unions and 2-unions (refer to Definition 5). We will construct now a \emph{weighted} graph $G_1=(V_1, E_1)$, in which, as before, the vertices are images of BCs. An edge between the vertices exists if these vertices can create a union. The weight of the edge $(i,j)\in E_1$ equals 1 if BCs $i$ and $j$  create a 1-union and 2 if they form a 2-union. In the algorithm, $M_w$, instead of the maximum matching at each step, a \emph{max-weight matching} constructed. There are no more differences from the algorithm $M$. We introduce the following additional notation:
\begin{itemize}
  \item $w_k^*$ is the weight of the $k$th matching in the optimal packing;
  \item $w_k$ is the weight of the $k$th matching in the packing constructed by algorithm $M_w$;
  \item $k_1^*$ is the number of 2-unions in the first matching in the optimal packing;
  \item $k_1$ is the number of 2-unions in the first matching of maximum weight constructed by algorithm $M_w$.
\end{itemize}
Since each BC with the above properties can participate in the 2-union only once, the following property is valid.
\begin{property}
 2-unions can only be when constructing the first matching.
\end{property}
Then $w_k^*=m_k^*,\ k\geq 2$. One can see the example of disassembling of optimal packing in Fig. 4.

\begin{figure*}
\includegraphics[width=\textwidth]{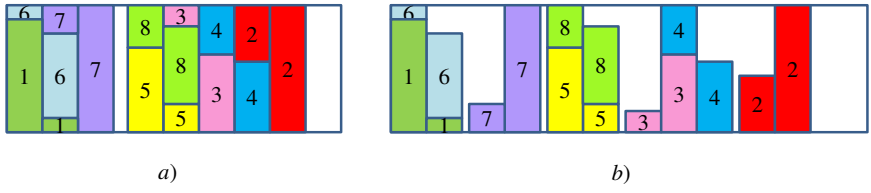}
\caption{Disassembling of optimal packing. \emph{a}) Optimal packing; \emph{b}) First optimal matching.}
\label{fig4}
\end{figure*}

\begin{lemma}
 The minimal length of packing $n$ big BCs is at least $n$ ($OPT\geq n$).
\end{lemma}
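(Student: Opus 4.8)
The plan is to prove this by a direct counting argument on the ``big'' bars, with no need for matchings. For each BC $i\in S$, since it is big, at least one of its two bars has height strictly greater than $1/2$; fix one such bar and call it the \emph{heavy bar} of $i$ (if both bars of $i$ exceed $1/2$, pick either one of them arbitrarily). In any feasible packing $p$, the heavy bar of $i$ lies entirely within a single cell, namely one of the two cells $p(i)$ or $p(i)+1$ occupied by BC $i$.

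Next I would observe that no single cell of the strip can contain the heavy bars of two distinct BCs: if some cell contained two heavy bars, the sum of the bar heights falling into that cell would be strictly greater than $1/2+1/2=1$, contradicting the feasibility condition (Definition~4). Consequently, assigning to each BC $i\in S$ the cell that contains its heavy bar defines an injective map from $S$ into the set of occupied cells, so at least $n$ distinct cells of the strip receive at least one bar.

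Finally, by the definition of the packing length $L(p)$ as the number of cells into which at least one bar falls (Definition~5), it follows that $L(p)\geq n$ for every feasible packing $p$, and in particular $OPT\geq n$.

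I expect no genuine obstacle in this argument. The only point that requires a word of care is the case of a BC both of whose bars are big, which is handled simply by designating one of them as the heavy bar; the fact that in that case the two big bars of a single BC necessarily occupy different cells is not needed, since the injectivity we use is only across distinct BCs.
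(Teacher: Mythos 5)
Your proof is correct, and it takes a genuinely different route from the paper's. The paper argues through its union/matching machinery: it counts the $k$ 2-unions in the optimal packing (which by Property 2 cannot be combined further), notes they save $2k$ cells, bounds the additional savings from 1-unions among the remaining $n-2k$ BCs by $n-2k$ via Property 1, and concludes $OPT\geq 2n-2k-(n-2k)=n$. You instead give a direct pigeonhole argument: designate in each big BC a bar of height strictly greater than $1/2$, observe that feasibility (Definition 4) forbids two such bars from sharing a cell, and conclude that the induced map from BCs to occupied cells is injective, so every feasible packing occupies at least $n$ cells. Your argument is more elementary and self-contained --- it does not lean on the decomposition of packings into unions, nor on Properties 1 and 2, whose statements are somewhat informal --- and it proves the slightly stronger (though equivalent in consequence) fact that \emph{every} feasible packing of big BCs has length at least $n$. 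The paper's version, on the other hand, is phrased in the same disassembling language used throughout Sections 3 and 4, which keeps the exposition uniform with Lemma 1 and the proof of Theorem 2. Either proof is acceptable; yours is arguably the cleaner one for this particular statement.
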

\begin{proof}
Suppose that the optimal packing contains $k$ 2-unions (which, according to Property 2, can no longer be combined). They reduce packing length by $2k$ cells. The remaining $n-2k$ BCs can only be combined into 1-unions, either with each other or with 2-unions, reducing the packing length by a maximum of $n-2k$ cells (Property 1). Therefore, $OPT\geq 2n-2k-(n-2k)=n$.
\end{proof}

\begin{theorem}
  If all 2-BCs are big, then with time complexity of $O(n^4)$, the algorithm $M_w$ constructs a 3/2-approximate solution to the 2-BCPP.
\end{theorem}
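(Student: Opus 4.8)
The plan is to follow the proof of Theorem~1 almost line for line, with cardinalities of matchings replaced by their weights, and to call on Property~2 and Lemma~2 to absorb the new difficulty caused by $2$-unions. Since building a $1$-union shortens the strip by exactly one cell and a $2$-union by exactly two, and the edge weights of $G_1,G_2,\ldots$ are by definition these reductions, the length produced by $M_w$ is $L_{M_w}(n)=2n-w_1-w_2-\ldots-w_p$, whence $L_{M_w}(n)\le 2n-w_1$; disassembling an optimal packing into a succession of weighted matchings in the same way gives $OPT=2n-w_1^*-w_2^*-\ldots-w_q^*$. Moreover the first matching of that optimal disassembly is itself a matching of the weighted graph $G_1$ of weight at least $w_1^*$, so $w_1\ge w_1^*$ because $M_w$ selects a maximum-weight matching.

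The heart of the proof is the weighted analogue of Lemma~1, namely $w_2^*+\ldots+w_q^*\le w_1^*$, which gives $OPT\ge 2n-2w_1^*\ge 2n-2w_1$. I would establish it block by block. Cut the optimal packing into its maximal connected blocks; inside a block every pair of consecutive cells is covered either by a single $2$-BC or by a $2$-union (Definition~5), so a block is a chain of simple $2$-BCs with some number $j$ of $2$-unions inserted, and hence splits into at most $j+1$ maximal runs of simple $2$-BCs separated by those $2$-unions. By Property~2 every $2$-union of the block is used in the first matching, contributing $2j$; in addition the first matching pairs up simple $2$-BCs within each run, contributing $\lfloor r/2\rfloor$ for a run of $r$ of them. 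By Property~1, applied to the pieces surviving in the block after the first matching (the $j$ two-cell $2$-union blocks, the $3$-BCs obtained from the run pairs, and the leftover simple $2$-BCs of the odd runs), the reduction still obtainable inside the block is at most the number of these pieces minus one. Writing both sides out, the per-block inequality reduces to ``number of odd-length runs $\le j+1$'', which holds because there are at most $j+1$ runs; summing over the blocks yields $w_2^*+\ldots+w_q^*\le w_1^*$.

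It remains to combine the estimates, which I would do by splitting on the size of $w_1$. If $w_1\le n/2$, then $OPT\ge 2n-2w_1\ge n>0$ and, exactly as in Theorem~1,
\[
\frac{L_{M_w}(n)}{OPT}\le\frac{2n-w_1}{2n-2w_1}=1+\frac{w_1}{2n-2w_1}\le 1+\frac{n/2}{2n-n}=\frac{3}{2}.
\]
If $w_1>n/2$, then $OPT\ge n$ by Lemma~2, while $L_{M_w}(n)\le 2n-w_1<2n-\frac{n}{2}=\frac{3}{2}\,n\le\frac{3}{2}\,OPT$. In both cases $L_{M_w}(n)\le\frac{3}{2}\,OPT$. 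For the running time, $M_w$ performs at most $n$ rounds (a round whose matching is nonempty merges at least one pair of blocks), and each round builds a weighted graph on at most $n$ vertices and $O(n^2)$ edges and computes a maximum-weight matching in $O(n^3)$ time, so the total is $O(n^4)$.

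I expect the real obstacle to be the weighted analogue of Lemma~1. In the non-increasing setting of Theorem~1 only $1$-unions arise, so after the first matching the instance falls apart into independent chains and the bound is immediate; here a $2$-union, although ``completed'' by Property~2, survives as a two-cell block that can still be glued to simple $2$-BCs in later rounds, so the post-first-matching reductions are not confined to the $2$-union-free part of the instance. Making the block-wise accounting precise --- in particular pinning down how isolated the inserted $2$-unions must be, and checking the parity count --- is where the work lies; everything else merely mirrors the arguments already used for Theorem~1 and Lemmas~1 and~2.
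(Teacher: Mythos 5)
Your proposal is correct and keeps the paper's overall skeleton (disassemble the optimum into a sequence of matchings, bound what can still be saved after the first one, and play that against Lemma~2), but the central inequality you prove is genuinely different from the one the paper uses --- and, as it happens, the more robust choice. The paper derives $OPT\geq 2n-w_1^*-m_1^*$ by appealing to Lemma~1, substitutes $m_1^*\leq n/2$ and $w_1^*\leq w_1$ to get $f(x)=1+\frac{1}{3-2x}$, and intersects this with $g(x)=2-x$ from Lemma~2; your case split at $w_1=n/2$ is arithmetically the same intersection, so the endgame is equivalent. The real divergence is upstream. Lemma~1 was proved only for non-increasing BCs, where every union is a chain of 1-unions, and its conclusion $m_2^*+\ldots+m_q^*\leq m_1^*$ can fail once 2-unions appear: take four big BCs with heights $(0.6,0.2)$, $(0.6,0.2)$, $(0.2,0.6)$, $(0.2,0.6)$ packed into cells $1$, $2$, $2$, $3$ (a single BC, then a 2-union, then a single BC, occupying four cells). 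The first matching of the disassembly is the lone 2-union, so $m_1^*=1$ and $w_1^*=2$, yet the reduction still needed to reassemble the block is $2$; hence $OPT=4$ while $2n-w_1^*-m_1^*=5$. Your weighted inequality $w_2^*+\ldots+w_q^*\leq w_1^*$ survives exactly because the 2-union contributes $2$ to $w_1^*$, and your block-by-block accounting (at most two big BCs can share a starting cell, so a block is a chain of singletons and 2-unions; the number of odd runs is at most $j+1$) is precisely the structural argument needed to make this step rigorous in the general big case. So your route costs a short structural lemma but buys the bound $OPT\geq 2n-2w_1$ that is actually valid in the presence of 2-unions; the rest --- $L\leq 2n-w_1$, the $OPT\geq n$ fallback for $w_1>n/2$, and the $O(n^4)$ complexity count --- matches the paper.
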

\begin{proof}
The first matching of maximum weight contains $k_1$ 2-unions. In the optimal packing, after disassembling, the first matching contains $k_1^*$ 2-unions. Because of Lemma 1, we have:
$$
 OPT=2n-w_1^*-m_1^*-\ldots -m_q^*\geq 2n-w_1^*-m_1^*.
$$
Therefore, taking into account inequality $L=L_{M_w}(n)\leq 2n-w_1$, we have
$$
 \varepsilon=\frac{L}{OPT}\leq\frac{2n-w_1}{2n-w_1^*-m_1^*}\leq\frac{2n-w_1}{2n-w_1-n/2}=1+\frac{1}{3-2w_1/n}=f(x),
$$
where $x=w_1/n$.

On the other hand (by Lemma 2) $OPT\geq n$. Hence
$$
 \varepsilon=\frac{L}{OPT}\leq\frac{2n-w_1}{n}=2-w_1/n=g(x).
$$
Therefore, $\varepsilon\leq\min\{f(x),g(x)\}$. The function $f(x)$ is increasing, and $g(x)$ is decreasing. Let $f(x_0)=g(x_0)$. Then $\varepsilon\leq f(x_0)=g(x_0)$. To find $x_0$, solve the equation $1+\frac{1}{3-2x}=2-x$, or $2x^2-5x+2=0$. There is one suitable solution $x=1/2$. Then $\varepsilon\leq g(1/2)=3/2$.

The time complexity of constructing a max-weight matching is $O(n^3)$ \cite{Gabow90,Galil86}, then the complexity of algorithm $M_w$ is $O(n^4)$.
\end{proof}

In Fig. 5, we give an example of the asymptotical attainability of the obtained estimate. In this example, $4k$ big 2-BCs: $2k$ of them are green non-increasing, and $2k$ are red non-decreasing. In the optimal packing, we construct the first matching shown in Fig. 5\emph{a}. Then the optimal packing is in Fig. 5\emph{b}, and $OPT=4k+1$. Algorithm $M_w$ can construct first max-weight matching, as in Fig. 5\emph{c}, and then there is no any more matching, and the length of the packing is $L=L_{M_w}=8k-2k$. Therefore, when $k$ tends to $\infty$, $L/OPT=6k/(4k+1)$ tends to 3/2.
\begin{figure*}
\includegraphics[width=\textwidth]{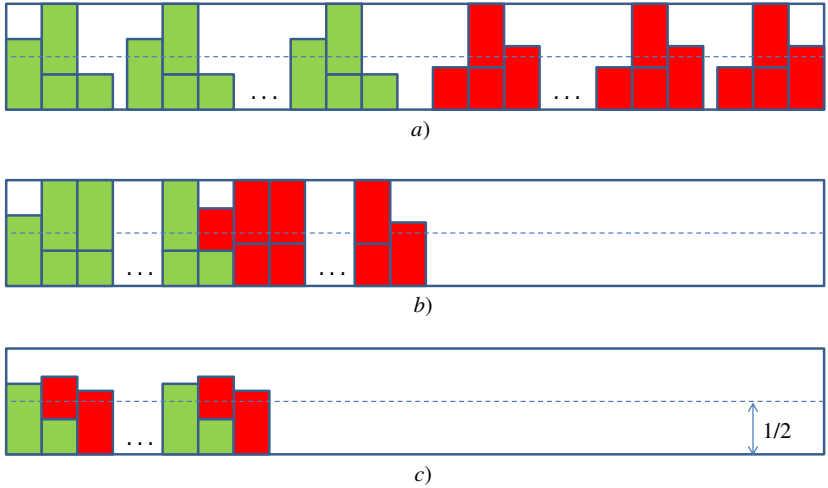}
\caption{\emph{a}) 3-BCs after the first optimal matching; \emph{b}) Optimal packing; \emph{b}) First max-weight matching built be $M_w$.}
\label{fig5}
\end{figure*}

\section{Conclusion}
We considered a new problem, in which it is necessary to pack two-bar charts in a unit-height strip of minimum length. The problem is a generalization of the bin packing problem. Earlier, we proposed an $O(n^2)$-time algorithm, which builds packing of length at most $2\cdot OPT+1$ for the general case. In this paper, we proposed an $O(n^4)$-time 3/2-approximate algorithm for the case when each BC has at least one bar greater than 1/2.

In future research, we plan to get a new estimate not only for big but also for arbitrary BCs.

\end{document}